\newcommand{\R}{\mathbb{R}}
\title{Curvature of point clouds through \\ principal component analysis}
\author{Yasuhiko Asao\footnote{Fukuoka University, 8-19-1 Nanakuma, Jonan-ku, Fukuoka city, Fukuoka, Japan. \texttt{asao@fukuoka-u.ac.jp}} \and Yuichi Ike\footnote{Graduate School of Information Science and Technology, The University of Tokyo, 7-3-1 Hongo, Bunkyo-ku, Tokyo 113-8656, Japan. \texttt{yuichi.ike.1990@gmail.com}, \texttt{ike@mist.i.u-tokyo.ac.jp}} %\footnote{corresponding author}
}
\date{\today}
\begin{document}

\maketitle

\begin{abstract}
In this article, we study curvature-like feature value of data sets in Euclidean spaces. First, we formulate such curvature functions with desirable properties under the manifold hypothesis. Then we make a test property for the validity of the curvature function by the law of large numbers, and check it for the function we construct by numerical experiments. These experiments also suggest the conjecture that the mean of the curvature of sample manifolds coincides with the curvature of the mean manifold. Our construction is based on the dimension estimation by the principal component analysis and the Gaussian curvature of hypersurfaces. 
Our function depends on provisional parameters $\varepsilon, \delta$, and we suggest dealing with the resulting functions as a function of these parameters to get some robustness.
As an application, we propose a method to decompose data sets into some parts reflecting local structure. 
For this, we embed the data sets into higher dimensional Euclidean space using curvature values and cluster them in the embedding space.
We also give some computational experiments that support the effectiveness of our methods.
\end{abstract}

\noindent Keywords: curvature, principal component analysis, manifold hypothesis, Gaussian random fields

\section{Introduction}

\subsection{Motivation}

Most of data we deal with take the form of point clouds in Euclidean spaces. 
As many data analysts propose, they are often distributed around an embedded manifold in the Euclidean space, which is now called the \emph{manifold hypothesis} (see, for example, \cite{ma2011manifold,NSW,fefferman2016testing}). 
To extract some feature values from them is a pivotal task for data analysis. 
From the standpoint of the manifold hypothesis, estimating its curvature seems effective and natural. 
Several authors study the curvature of data from the view point of manifold learning or 3D shape study.
See \cref{subsec:related} for these related works. 

In this article, we first formulate a curvature function of data to satisfy some desirable properties under the manifold hypothesis, and we construct a candidate for such a function. Then we make a test property for the validity of the curvature function based on the law of large numbers, and check it for the function we construct. Our construction is based on the dimension estimation by the principal component analysis and the Gaussian curvature of hypersurfaces. 
We also give a computation algorithm for it as outlined in the next subsection.
\begin{comment}
Our suggestion is based on the Gaussian curvature of hypersurface. 
To find a hypersurface around which the data distributed, we take a projection and quadratic fitting of the data. 
We apply principal component analysis method for the projection procedure. 
We note that our method is motivated by the manifold hypothesis, but we do not assume any restriction on the distribution of the data.
\end{comment}

\subsection{Summary of our construction}\label{summary}

Suppose that the data set $X$ we deal with is a finite subset of the Euclidean space $\R^{n}$ for some $n$. 
We estimate the curvature of $X$ at $p \in X$ by the following procedure. 
\begin{enumerate}
\item Fix thresholds $\varepsilon > 0, \delta > 0$. 
\item Calculate the eigenvalues $\lambda_{1} \ge \cdots \ge \lambda_{n}$ of the covariance matrix of the $\varepsilon$-neighborhood of $p$. 
Let $u_{1}, \cdots, u_{n}$ be the corresponding eigenvectors. 
\item Look at the eigenvalues $\lambda_{1} \ge \cdots \ge \lambda_{K} \geq \delta$ that are greater than the threshold $\delta$. 
We regard the affine space $p+\langle u_{1}, \cdots, u_{K}\rangle$
as the tangent space of the data at the point $p$, and also regard $u_{K+1}$ as the normal direction at $p$.
\item We give a frame and an orientation of the tangent space. 
We project all the points around $p$ to the affine space spanned by the tangent space and the normal direction. 
Then we fit the resulting points by quadratic hypersurface in the affine space, and compute its Gaussian curvature at $p$ with respect to the fixed frame.
\end{enumerate}
The steps (i)--(iii) are essentially the same as principal component analysis. 
See \cref{subsec:tangent} for the mathematical background of (i)--(iii), and see \cref{prp:aprx} for that of (iv). 
For the detail of the algorithm, see \cref{sec:alg}.

\subsection{Contributions}

Our contributions are the following.
\begin{enumerate}
    \item[(1)] In \cref{sec:framework}, we formulate curvature functions of data under the manifold hypothesis probability theoretically. Based on this formulation, we make a test property which is desirable for such a function. This property is verifiable by computations.
    \item[(2)] In \cref{sec:theoretical}, we define the ``curvature" of point sets and give theoretical foundation that supports the validity of the definition. 
    For that purpose, we investigate a modified version of the principal component analysis and an estimation method of Gaussian curvature.
    \item[(3)] In \cref{sec:alg}, we propose a novel algorithm to compute the curvature based on the theoretical background. 
    We also propose a clustering algorithm relying on the curvature values of a point set.
    \item[(4)] In \cref{sec:exp}, we experimentally show the effectiveness of our method using artificial and real-world data sets. 
    % We also utilize our method to translate geometric feature of data into topological one, which is output as a persistent diagram.
\end{enumerate}

\subsection{Related work}\label{subsec:related}

Our method first computes the tangent space at each point, which is related to studies in manifold learning (see, for example, \cite{lee2007nonlinear,ma2011manifold,wang2012geometric}). 
There one usually use $k$-nearest neighbor graph instead of a ball centered at each point as in our algorithm. 
Locally Linear Embedding (LLE)~\cite{roweis2000nonlinear} estimates tangent spaces of a point set relying on $k$-nearest neighbor graph.
\cite{bangio2004non-local} proposes a method to estimate tangent spaces by using neural networks whose input is global information of a point set, while LLE only uses local information. 
In most of the studies in manifold learning, it is often sufficient to estimate the tangent space for machine learning tasks, hence there are few studies on non-linear approximation of the manifold as in our method.

In 3D shape studies, curvature-like feature values have been studied from various point of views.
\cite{taubin1995estimating} proposes estimating curvature via polyhedral approximation. 
\cite{lange2005anisotropic} computes curvature of a point set by solving some minimization problems to estimate tangent spaces.
\cite{cao2019efficient} studies the Weingarten map from a point set and estimates its convergence rate.
In this article, we do not only focus on 3D shapes, but also explore a method that can be applied to broader type of data satisfying the manifold hypothesis.

A mathematical framework of ours related to the manifold hypothesis is close to one considered in \cite{NSW}. They formalize the hypothesis by considering a distribution on the normal bundle of a submanifold of $\R^n$, where the submanifold is conceptualized as geometric core of the distribution, and the normal direction is considered for noise distribution. On the other hand, our formulation does not have any restriction on the direction of noise. Another formulation is discussed in \cite{fefferman2016testing}, where they consider a distribution supported on a unit sphere in a Hilbert space.
In \cite{niyogi2008finding,NSW}, the authors %consider Gaussian noise concentrated around a submanifold under the manifold hypothesis and 
discuss how to recover the homology from observations by utilizing a quantity $1/\tau$ which is closely related to the curvature of manifolds. 
%There appears a quantity $1/\tau$ associated to an embedding of manifold that is deeply related to the curvature of the manifold. 
The relationship between $1/\tau$ and the curvature function we construct is not clear and we plan to investigate it in future work.
In \cite{AKTW}, they study the limit behavior of $1/\tau$ as the dimension of ambient space gets large.
Subsequent research such as \cite{fasy2014confidence,chazal2017robust} studies the problem combining methods in topological data analysis with statistics.
Our theoretical setting is somewhat similar to theirs (see \cref{rem:nsw}), but our method focuses on geometrical features rather than topological ones.

\subsection*{Acknowledgements}

The first author was supported by RIKEN Center for Advanced Intelligence Project.
The second author was partially supported by JST ACT-X (JPMJAX1903).

\section{Our framework involving manifold hypothesis}\label{sec:framework}

In this section, we explain the hypothetical framework under which we work.

Let $\Omega$ be a probability space. 
We consider a stochastic process $\Theta \colon \Omega \times \R^m \to \R^n$ centered at a manifold, that is, the restriction $\Theta(-, p) \colon \Omega \to \R^n$ is measurable for every $p \in \R^m$, and the map $\int \Theta \colon \R^m \to \R^n$ defined by 
\begin{equation}
    \int \Theta (p) = \int \Theta(\omega, p)d\omega
\end{equation}
is an embedding. 
We call such a stochastic process \emph{a stochastic embedding}. We regard stochastic embeddings as a source of data obeying the manifold hypothesis. We will later formulate data as some subset of the image of samples of stochastic embeddings. 

\begin{remark}\label{rem:nsw}
If we have a restriction that $\Theta(\omega, p)$ varies (with respect to $\omega$) only in the normal direction of the mean manifold at $p$, this formulation is very close to that of Niyogi--Smale--Weinberger's~\cite{NSW}. 
They consider that the distribution of data is induced from that on the normal bundle of a manifold. One of the merits of our approach using stochastic process is that we can deal with multiple points simultaneously which are generated with some noise from distinct points of the manifold. This is convenient to consider the curvature of the sample points. On the other hand, Niyogi--Smale--Weinberger consider a quantity $1/\tau$ associated to an embedded manifold that is a curvature-like quantity. However, it seems difficult to estimate $1/\tau$ from the data. 
\end{remark}

We denote by $\Emb(\R^m, \R^n)$ the set of embeddings $\R^m \to \R^n$, and also denote by $\Emb(\Omega \times \R^m, \R^n)$ the set of stochastic embeddings $\Omega \times \R^m \to \R^n$. 
We also denote by $\SP(\Omega \times \R^m, \R^n)$ the set of stochastic processes $\Omega \times \R^m \to \R^n$, that is a collection of maps $\Omega \times \R^m \to \R^n$ satisfying that $\Theta(-, p) \colon \Omega \to \R^n$ is measurable for every $p \in \R^m$. 
Assume that we have a map $c \colon \Emb(\R^m, \R^n) \to \Map(\R^m, \R)$ that extracts geometric feature of each point of embeddings, and we call such a map \emph{curvature function}. 
Here $\Map(A,B)$ denotes the set of maps from $A$ to $B$.
For example, we can take $c$ as the sectional curvature of the embedded manifold equipped with a metric induced from the standard metric of the ambient Euclidean space. 
We further assume that, corresponding to a curvature function $c$, there is a map $c_\Omega : \Emb(\Omega \times \R^m, \R^n) \to \SP(\Omega \times \R^m, \R)$ that plays as a stochastic version of curvature function. We may expect that $c_\Omega$ is a `blurring' of the curvature map $c$, and  the expectation of $c_\Omega$ coincides with $c$. That is, the maps $c$ and $c_\Omega$ are expected to satisfy the following commutative diagram:
\begin{equation}\label{diagram1}
\begin{split}
\xymatrix@C=30pt{
\Emb(\Omega \times \R^m, \R^n) \ar[r]^-{c_\Omega} \ar[d]_-{\int} & \SP(\Omega \times \R^m, \R) \ar[d]^-{\int} \\
\Emb(\R^m, \R^n) \ar[r]_-{c} & \Map(\R^m, \R), 
}
\end{split}
\end{equation}
where $\int$ denotes the integration over $\Omega$. 
We do not prove the existence of such a map $c_\Omega$ (it will be studied in our future work), but we can check the existence if we restrict $\Emb(\Omega \times \R^m, \R^n)$ to the set of stochastic embeddings with the following conditions:
\begin{enumerate}
    \item $m = 2, n = 3$, and $\int \Theta$ is a surface,
    \item $\Theta(\omega, -) \colon \R^m \to \R^n$ is smooth for every $\omega \in \Omega$,
    \item $\partial_{ij}\Theta(-, p) \colon \Omega \to \R^n$, the second derivatives at $p \in \R^m$, are uncorrelated for every $p$. That is, the covariance of each pair of the differentials is 0.
\end{enumerate}
If $\Theta$ satisfies the above conditions, then the mean of the Gaussian curvature coincides with the Gaussian curvature of the mean manifold $\int \Theta$. 
We consider that such a restriction makes sense when we deal with real-world data. For example, a Gaussian process $\Theta : \Omega \times \R^2 \longrightarrow \R^3$ with $\Theta(\omega, p) = (p, t)$, $t \sim N(0, 1)$ whose covariance is determined the kernel $K(p, q) = p^{\mathrm T}q$ satisfies the above. It is checked by calculating 
\[
\int \partial_{ij}\Theta(\omega, p)\partial_{i'j'}\Theta(\omega, p)d\omega = \partial_{ij}\partial_{i'j'}K(p, q)|_{p=q} = 0.
\]
For another example, we can take the RBF kernel $K(p, q) = e^{-\|p-q\|^2}$ for the same process so that mean of the curvature coincides with the curvature of the mean. We can check it by calculating
\[
\int \partial_{11}\Theta(\omega, p)\partial_{22}\Theta(\omega, p)d\omega = \partial_{11}\partial_{22}K(p, q)|_{p=q} = 4,
\]
and
\[
\int \partial_{12}\Theta(\omega, p)\partial_{12}\Theta(\omega, p)d\omega = \partial_{12}\partial_{12}K(p, q)|_{p=q} = 4.
\]
We also check it numerically in \cref{subsec:noise}.

Next we formulate the term ``\emph{data}" under the manifold hypothesis. 
Let $X$ be a finite subset of $\R^m$.
For a fixed $\omega \in \Omega$, we define a map $(\omega, X) \colon \Emb(\Omega \times \R^m, \R^n) \to \Map(X, \R^n)$ by 
\[
(\omega, X)\Theta = \Theta(\omega, -)|_{X}.
\]
We define \emph{a data} as an element in $\Map(X, \R^n)$ which is contained in the image of $(\omega, X)$ for some $\omega$. For a data $D \in \Map(X, \R^n)$, we call each element $D(x)$ for $x \in X$ \emph{a datum}.
We denote $\Data_{\Omega}(X, \R^n)$ the set of data. A pivotal task of data analysis is to extract some feature value from a data which is a finite point cloud. Hence we formulate such task including curvature estimation as a map $\Data_{\Omega}(X, \R^n) \to \Map(X, \R)$. In contrast to such maps, we regard $c_\Omega$ as an `ideal feature value function', and we expect that extraction from a data is an approximation of that by ideal function. Hence we define \emph{a curvature function of data} $c_d : {\sf Data}_{\Omega}(X, \R^n) \to \Map(X, \R)$ as a function with the following commutative diagram:
\begin{equation}\label{diagram2}
\xymatrix{
\Emb(\Omega \times \R^m, \R^n) \ar[rr]^-{\int \circ c_\Omega} \ar[d]_-{X^\ast} && \Map(\R^m, \R) \ar[d]^-{X^\ast} \\
\Map(\Omega \times X, \R^n) \ar[r]_-{\tilde{c}_d} &\Map(\Omega, \Map(X, \R)) \ar[r]_-{\int} & \Map(X, \R), 
}
\end{equation}
where $X^\ast$ denotes the restriction to $X$, $\tilde{c}_{d}$ is only defined on ${\rm Im} X^\ast$ by $\tilde{c}_d(X^\ast \Theta)(\omega) = c_{d}((\omega, X)\Theta)$, and the map $\int $ in the bottom is defined by $\int \varphi (x) = \int \varphi(x)(\omega)d\omega$ for any $\varphi \in \Map(\Omega, \Map(X, \R))$ and $x \in X$.  
If such a map $c_d$ exists, then we have the convergence
\begin{equation}\label{eq:lln}
\sum_i \frac{c_d(\omega)\circ X^\ast \Theta_i}{n} \to \int c_d(\omega)X^\ast \Theta_i d\omega = c|_{X}\int \Theta_i,
\end{equation}
where $\Theta_i$'s in $\Emb(\Omega \times \R^m, \R^n) $ by the law of large numbers. 
In this article, we construct a candidate of the curvature function of data $c_d$ and show its validity by checking \eqref{eq:lln} in \cref{subsec:noise}.

\section{Theoretical background}
\label{sec:theoretical}

In this section, we give the theoretical background that supports our method. 
In the first subsection, we review \emph{principal component analysis} with slight modification according to our usage. 
For the usual principal component analysis, we refer to, for example, \cite{jolliffe1986principal,abdi2010principal}. 
This subsection details the theoretical background of (i)--(iii) in \cref{summary}. 
In the second subsection, we explain how to estimate curvature of point sets from Gaussian curvature of hypersurface by using principal components. 
This subsection details the background of  (iv) in \cref{summary}.

\subsection{Our usage of principal component analysis}\label{subsec:tangent}

Let $X = \{p_1, \dots, p_N \} \subset \R^{n}$ be a finite subset and $p \in \R^n$. 
For an inner product space $(W, \cdot \ )$, we denote by $S(W)$ the set of all unit vectors in $W$. 
Let $u, v \in S(\R^n)$. 
We define \emph{the covariance of $X$ from $p$ along $u$ and $v$} by 
\[
V_X(p, u, v) := \frac{1}{N}\sum_{i=1}^N \left((p - p_i)\cdot u \right)\left((p - p_i)\cdot v \right).
\]
We write $V_X(p, v, v) = V_X(p, v)$ and call it \emph{the variance of $X$ from $p$ along $v$}. 
The variance $V_X(p, v)$ indicates how the data $X$ is scattered in the direction $v$ from the point $p$. 
We define the \emph{covariance matrix of $X$ from $p$} by $V_X(p) := (V_X(p, e_i, e_j))_{i, j} \in M_n(\R)$, where $e_{i}$'s are the standard basis of $\R^n$. 
Using $V_X(p)$, we can write $V_X(p, v)$ as 
\begin{align}
V_X(p, v) &= \frac{1}{N}\sum_{i=1}^{N} \left((p - p_i)\cdot v \right)^2 \\
&= \frac{1}{N}\sum_{i=1}^{N} v^T (p - p_i)(p - p_i)^T v \\
&= v^T \left( \frac{1}{N}\sum_{i=1}^{N} (p - p_i)(p - p_i)^T \right) v \\
&= v^T V_X(p)v.
\end{align}
Hence all the eigenvalues of $V_X(p)$ are non-negative since $V_X(p, v) \geq 0$ for all $v$. 
The \emph{principal component analysis} method gives an orthogonal decomposition $\R^n \cong W_1\oplus \cdots \oplus W_m$ such that any unit vector in $S(W_i)$ maximizes $V_X(p, v)$ under the restriction  $v \in S(W_{i}\oplus \cdots \oplus W_m)$.
More precisely, we have the following proposition.

\begin{proposition}\label{prp:pca}
Let $\lambda_1 >  \dots > \lambda_m \ (1 \leq m \leq n)$ be the eigenvalues of the covariance matrix $V_X(p)$, and let $W_1, \dots, W_m$ be the corresponding eigenspaces, respectively. 
Then for $1 \leq i \leq m$ and any $v \in S(W_i)$, 
\[
v = \argmax_{u \in S(W_{i}\oplus \cdots \oplus W_m)} V_{X}(p, u).
\]
\end{proposition}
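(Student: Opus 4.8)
The plan is to reduce the statement to the standard variational (Rayleigh-quotient) characterization of the eigenvectors of a symmetric positive semidefinite matrix. First I would record that $V_X(p) = \frac{1}{N}\sum_{i=1}^N (p-p_i)(p-p_i)^T$ is symmetric, so by the spectral theorem $\R^n$ decomposes orthogonally as $W_1 \oplus \cdots \oplus W_m$ into the eigenspaces of the distinct eigenvalues $\lambda_1 > \cdots > \lambda_m$; in particular, eigenvectors belonging to distinct eigenvalues are mutually orthogonal. Together with the identity $V_X(p,u) = u^T V_X(p) u$ already established in the computation preceding the proposition, this is the only structural input needed.

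Next, fix $i$ and set $U_i := W_i \oplus \cdots \oplus W_m$. For a unit vector $u \in S(U_i)$, write $u = \sum_{j=i}^m w_j$ with $w_j \in W_j$; by orthogonality of the eigenspaces, $\sum_{j=i}^m \|w_j\|^2 = \|u\|^2 = 1$. Using $V_X(p) w_j = \lambda_j w_j$ and orthogonality once more, I would compute
\[
V_X(p,u) = u^T V_X(p) u = \sum_{j=i}^m \lambda_j \|w_j\|^2 ,
\]
so that the quadratic form evaluated on $u$ is exactly a convex combination of the eigenvalues $\lambda_i, \dots, \lambda_m$.

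The conclusion then follows from the ordering of the eigenvalues: since $\lambda_i > \lambda_j$ for every $j > i$, we get
\[
V_X(p,u) = \sum_{j=i}^m \lambda_j \|w_j\|^2 \le \lambda_i \sum_{j=i}^m \|w_j\|^2 = \lambda_i ,
\]
with equality precisely when $\|w_j\| = 0$ for all $j > i$, that is, when $u \in W_i$. Since any $v \in S(W_i)$ attains $V_X(p,v) = \lambda_i$, every such $v$ is a maximizer over $S(U_i)$, which is exactly the assertion; the set of maximizers is all of $S(W_i)$ when $\dim W_i > 1$, which is why the statement is phrased for an arbitrary $v \in S(W_i)$.

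I do not anticipate a genuine obstacle here: the argument is the finite-dimensional Rayleigh-quotient computation, and the only points requiring care are invoking the spectral theorem to guarantee orthogonality of the eigenspaces (so that the cross terms vanish and the squared norms sum to $1$) and using the \emph{strict} inequality $\lambda_i > \lambda_j$ for $j > i$, which is what pins the equality case down to $W_i$ and thereby identifies the exact set of maximizers.
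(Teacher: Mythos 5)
Your proposal is correct and follows essentially the same route as the paper: both arguments rest on the orthogonality of the eigenspaces of the symmetric matrix $V_X(p)$, expand a unit vector of $W_i\oplus\cdots\oplus W_m$ in eigenvectors, and bound the quadratic form $V_X(p,u)=u^TV_X(p)u$ by $\lambda_i$, with equality attained on $S(W_i)$. The only cosmetic differences are that the paper verifies orthogonality by the direct computation $\lambda_i(u_i\cdot v_j)=u_i^TV_X(p)v_j=\lambda_j(u_i\cdot v_j)$ rather than citing the spectral theorem, and your explicit identification of the full equality set $S(W_i)$ is a small addition the paper leaves implicit.
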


\begin{proof}
For $u_i \in W_i$ and $v_j \in W_j$ with $i\neq j$, we have $u_i \cdot v_j = 0$ since $\lambda_i(u_i \cdot v_j) = u_{i}^{T}V_X(p)v_j = \lambda_j(u_i \cdot v_j)$. 
Hence $W_i$'s are orthogonal to each other. 
By taking an orthonormal basis of each $W_i$'s, we have an orthonormal basis $u_1, \dots, u_n$ of $W_1\oplus \cdots \oplus W_m$. 
We denote the eigenvalue corresponding to $u_i$ by $\lambda'_i$. 
Obviously, we have $\lambda'_1 \geq \dots \geq \lambda'_n$. 
Let $v = \sum_{j = k}^{n} a_{j}u_j \in S(W_i\oplus \cdots \oplus W_n)$. 
Since $u_j$'s are orthonormal, we have $\sum_{j = k}^{n} a_{j}^2 = 1$. Then we have 
\begin{align}
V_{X}(p, v) &= v^{T}V_{X}(p)v\\
&= \left(\sum_{j = k}^{n} a_{j}u_j \right)^{T }V_{X}(p) \left( \sum_{j = k}^{n} a_{j}u_j \right) \\
&= \left( \sum_{j = k}^{n} a_{j}u_j \right)^{T} \left( \sum_{j = k}^{n} \lambda'_{j}a_{j}u_j \right) \\
&= \sum_{j = k}^{n} \lambda'_{j}a_{j}^2\leq \lambda'_k \leq \lambda_i.
\end{align}
Hence, for any $v \in S(W_i\oplus \cdots W_n)$, we obtain $V_{X}(p, v) \leq \lambda_i$. 
On the other hand, for $v \in S(W_i)$, we have $V_{X}(p, v) = v^{T}V_{X}(p)v = \lambda_{i}v^{T}v = \lambda_i$.
This completes the proof.
\end{proof}

Let us introduce a terminology for later use. 
Consider an embedding of a manifold $M \to \R^n$. 
If it factors through an embedding $\iota \colon M \to W$ for some affine subspace $W \subset \R^n$, we call the embedding $\iota$ an \emph{efficient embedding}. 
From the standpoint of the manifold hypothesis that every point set we consider is distributed around an embedded manifold, the variance from a point on the manifold in the direction of its tangent vector should be large, while it should be very small in the other direction normal to the tangent space. 
Hence if we decompose $\R^{n}_{p}$, the set of all vectors from $p$, as $\R^{n}_{p} \cong W_1\oplus \cdots \oplus W_m$ by \cref{prp:pca}, it is reasonable to regard its subspace $W_1\oplus \cdots \oplus W_k$ as the tangent space at $p$. 
Here we fix some threshold $\delta>0$, and $k$ is determined by the conditions $\lambda_k \geq \delta$ and $\lambda_{k+1} < \delta$.
By adding an eigenvector in $W_{k+1}$ to this affine space, we regard it an ambient space in which the underlying manifold is efficiently embedded.

\subsection{Estimation of Gaussian curvature}\label{subsec:gauss}

To estimate the curvature of a point set, we use the notion of Gaussian curvature of hypersurfaces. 
See \cite{petersen2006riemannian} for Gaussian curvature of Riemannian manifolds.
It is known that if a hypersurface in $\mathbb{R}^{n}$ is expressed by an explicit function as $x_{n} = f(x_{1}, \cdots, x_{n-1})$ with $\partial f/\partial x_{i} = 0$ for $1 \leq i \leq n-1$, then its Gaussian curvature at the origin is expressed by the determinant of the Hessian $H_{f} = (\partial^2 f/\partial x_{i}\partial x_{j})_{i, j}|_{0}$. 
If a manifold is efficiently embedded in codimension 1, then we can see the manifold as a hypersurface and compute its Gaussian curvature. 
Let $\iota \colon M \to \R^n$ be an embedding of $d$-manifold, and $a \in M$. Even if $\iota$ is not codimension 1, under the assumption that the linear space $W$ spanned by the first and the second derivatives of $\iota$ at $a$ is $(d+1)$-dimensional, the following proposition guarantees that the orthogonal projection to $W$ sufficiently approximates its Gaussian curvature at $a$ by regarding the projeted manifold as an embbeding in $W$. 
Here we estimate the Gaussian curvature at $a$ by regarding the direction in $W$ normal to the tangent space of $\iota(a)$ as the codimensional direction.

\begin{proposition}\label{prp:aprx}
Let $\iota \colon \mathbb{R}^{m} \to \mathbb{R}^{n}$ be an embedding of class $C^2$, and let $a \in \mathbb{R}^m$. Consider the $\mathbb{R}$-vector space $W_{a}$ spanned by 
\begin{align}\label{vectors}
\left\{\frac{\partial \iota}{\partial x_{1}}(a), \dots, \frac{\partial \iota}{\partial x_{m}}(a), \frac{\partial^2 \iota}{\partial x_{1}\partial x_{1}}(a), \frac{\partial^2 \iota}{\partial x_{1}\partial x_{2}}(a), \dots, \frac{\partial^2 \iota}{\partial x_{m}\partial x_{m}}(a)\right\}.
\end{align}
Then the affine space $\iota(a) + W_{a} \subset \mathbb{R}^n$ has the following property. 
Let $P \colon \mathbb{R}^n \to \mathbb{R}^n$ be the projection to $\iota(a) + W_{a}$. 
Then for any $\varepsilon > 0$, there exist $\varepsilon' > 0$ such that the map $\iota|_{B(a; \varepsilon')}$ is $\varepsilon$-close to $P \circ \iota|_{B(a; \varepsilon')}$ in $C^{2}$, where $B(a; \varepsilon')$ denotes the $\varepsilon'$-ball centered at $a$.
\end{proposition}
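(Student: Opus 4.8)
The plan is to reduce the whole statement to the orthogonal projection onto the complementary subspace $W_a^\perp$ and to exploit that, by the very definition of $W_a$, this projection annihilates all first- and second-order derivatives of $\iota$ at $a$. First I would write the affine projection explicitly: since $P$ is the orthogonal projection onto $\iota(a)+W_a$, we have $P(y) = \iota(a) + \Pi(y-\iota(a))$, where $\Pi\colon\R^n\to\R^n$ is the orthogonal projection onto the linear subspace $W_a$. Setting $Q := \mathrm{id}-\Pi$, the orthogonal projection onto $W_a^\perp$, the difference to be estimated becomes
\[
\iota(x) - (P\circ\iota)(x) = Q\bigl(\iota(x)-\iota(a)\bigr).
\]
Because $Q$ is a fixed linear map it commutes with every partial derivative, so $\partial_i(\iota-P\iota) = Q\,\partial_i\iota$ and $\partial_{ij}(\iota-P\iota) = Q\,\partial_{ij}\iota$ for all $i,j$.

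The key observation is then immediate: by the definition of $W_a$ in \eqref{vectors}, each $\partial_i\iota(a)$ and each $\partial_{ij}\iota(a)$ lies in $W_a$ and is therefore killed by $Q$. Hence all first and second derivatives of $\iota-P\iota$ vanish at the single point $x=a$, and the zeroth-order term $Q(\iota(a)-\iota(a))$ vanishes trivially. What remains is a continuity argument. Since $\iota$ is of class $C^2$, the maps $x\mapsto Q\,\partial_i\iota(x)$ and $x\mapsto Q\,\partial_{ij}\iota(x)$ are continuous and vanish at $a$, so for any $\varepsilon>0$ I can choose $\varepsilon'>0$ small enough that their norms stay below $\varepsilon$ throughout $B(a;\varepsilon')$; this controls the $C^1$ and $C^2$ parts. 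For the $C^0$ part I would apply the mean value inequality to $Q(\iota-\iota(a))$, whose gradient is already bounded by $\varepsilon$ on the ball, yielding $\|Q(\iota(x)-\iota(a))\|\le\varepsilon\|x-a\|\le\varepsilon\varepsilon'$; shrinking $\varepsilon'$ below $1$ completes the bound. A single $\varepsilon'$ valid for the finitely many indices at once is obtained by taking a minimum.

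I do not expect a genuine obstacle here: the statement rests entirely on the structural remark that $Q$ annihilates exactly the spanning derivatives of $W_a$, after which only elementary continuity and a mean-value estimate are needed. The one point requiring mild care is the bookkeeping of constants---depending on whether ``$\varepsilon$-close in $C^2$'' is measured by the sum or the maximum over derivatives of orders $0,1,2$, one picks up a dimension-dependent factor, which is harmlessly absorbed by running the argument with a suitably small multiple of $\varepsilon$ as the target accuracy.
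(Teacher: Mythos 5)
Your proposal is correct and follows essentially the same route as the paper: both arguments rest on the observation that $P\circ\iota$ agrees with $\iota$ at $a$ together with all first and second derivatives (because those derivatives span $W_a$ and are hence fixed by the projection), after which continuity of the $C^2$ data yields the claim. Your version merely spells out the details the paper leaves implicit, such as the factorization through $Q=\mathrm{id}-\Pi$ and the mean-value estimate for the zeroth-order term.
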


\begin{proof}
We have $\iota|_{B(a; \varepsilon')}(a) = P\circ \iota|_{B(a; \varepsilon')}(a)$. 
Furthermore, since the affine space $\iota(a) + W_{a}$ is spanned by the vectors \eqref{vectors}, we have 
\begin{align}
\frac{\partial }{\partial x_{i}}P \circ \iota (a) &= P\left(\frac{\partial \iota}{\partial x_{i}}(a)\right) = \frac{\partial \iota}{\partial x_{i}}(a), \\
\frac{\partial^{2} }{\partial x_{i}\partial x_{j}}P \circ \iota (a) &= P\left(\frac{\partial^{2} \iota}{\partial x_{i}\partial x_{j}}(a)\right) = \frac{\partial^{2} \iota}{\partial x_{i}\partial x_{j}}(a)
\end{align}
for every $1 \leq i, j \leq m$. Hence the statement follows from the continuity of the map and the derivatives. 
\end{proof}

Now we provide our definition of \emph{``curvature" of embeddings and point sets}, which is not the standard notion of curvature, but enough for extracting a feature value of embeddings or point sets. 
For an embedding $\iota$, we choose a normal direction $n_a$ of tangent space at $\iota(a)$ for each point $\iota(a)$, and denote by $W_a$ the affine space spanned by the tangent space at $\iota(a)$ and $n_a$. 
We project the embedding $\iota$ to $W_a$ and regard its image around $\iota(a)$ as a hypersurface in $W_a$. 
We define the curvature of the embedding at $\iota(a)$ as the Gaussian curvature at $\iota(a)$. 
By \cref{prp:aprx}, this sufficiently approximates the Gaussian curvature in the situation explained right before the proposition. 
Let $X$ be a point set and $p \in X$. 
If they are distributed around an embedding of a manifold, it is easy to understand the following, but our procedure is irrelevant to the assumption. 
As explained in \cref{subsec:tangent}, we regard the affine space $W_p = W_1\oplus \cdots \oplus W_k$ determined by some fixed $\delta > 0$ as the tangent space of the underlying manifold. 
We choose an eigenvector in $W_{k+1}$ in the manner explained below, and regard it as the normal direction. 
We choose an orientation of the tangent space as explained later, and estimate the Gaussian curvature at $p$ by fitting a polynomial surface around $p$ regarded as a hypersurface. 
Note that, when the tangent space is even dimensional, the Gaussian curvature is independent of the choice of an orientation. 
It is checked by calculating the Hessian with respect to the orientation change.
The choice of the normal direction and the orientation is done as follows. 
Let $u_1, \dots u_K$ be orthonormal basis of $W_p = W_1\oplus \cdots \oplus W_k$ such that each $u_i$ belongs to some $W_j$. 
Such a basis exists since $W_j$'s are orthogonal to each other. 
See also the proof of \cref{prp:pca}.
Consider $u_{i}$'s as points on the sphere $S^{n-1}$. 
We give a decomposition $U \sqcup S^{n-1}\setminus U$ such that $v \in U$ or $-v \in U$ for every point $v \in S^{n-1}$, then we choose $u_{i}$'s so that they belong to $U$. 
Let $S^{i-1}_{+} = \{(x_{1}, \cdots, x_{i}) \in S^{i-1} \mid x_{i} > 0 )\}$ be the upper semi-sphere for $i \geq 0$. 
Inductively we define $U_{0} = \{ 1 \} \subset S^{0}$ and $U_{i} = S^{i}_{+} \cup U_{i-1}$ for $1\leq i \leq n-1$. 
Then the obtained $U_{n-1} \subset S^{n-1}$ satisfies the desired condition. We adjust the signatures of $u_i$'s so that they satisfy that $u_i \in U_{n-1}$. 
This determines a frame and an orientation of $W_p$, hence those of the tangent space at $p$.
It also determines $u_{K+1}$, which is the normal direction.

\section{Algorithm for computing curvatures of point clouds}\label{sec:alg}

In this section, we give our algorithm to compute the curvature of a given point set. 
We also provide a clustering algorithm of point sets depending on the curvature values. 
Until the end of this section, let $X$ be a finite set in $\bR^n$.

\subsection{Curvature of point clouds}\label{subsec:curv-alg}

Our aim is to calculate a curvature-like quantity for each point $p \in X$. 
In order to work locally near $p \in X$, we fix a threshold $\varepsilon>0$ and consider the subset $B:=X \cap B(p;\varepsilon)$ of $X$, where $B(p;\varepsilon)$ denotes the open ball of radius $\varepsilon$ centered at $p$.
First we apply the principal component analysis described in \cref{subsec:tangent} to $B$. 
That is, we diagonalize the covariance matrix $\Var_{B}(p)$ as $U^TDU$, where  $D=\diag(\lambda_1,\dots,\lambda_n)$ with $\lambda_1 \ge \dots \ge \lambda_n$ and $U$ is an orthogonal matrix.
Write $U=[u_1, \dots, u_n]$.
We give an orientation for $u_1,\dots,u_n$ as in \cref{subsec:gauss}. It is pursued for each $u_{i} = (s^{i}_{1}, \dots, s^{i}_{n})$ as follows. 
Look at the $n$-th coordinate $s^{i}_{n}$, and set $u_{i} = -u_{i}$ if $s^{i}_{n}<0$, and remain if $s^{i}_{n}>0$. 
If $s^{i}_{n}=0$, look at the $(n-1)$-th coordinate $s^{i}_{n-1}$ and repeat this procedure.

Now, set another threshold $\delta>0$ and let $\lambda_1 \ge \dots \ge \lambda_K \ge \delta$ be the eigenvalues at least $\delta$. 
We consider the $K$-dimensional affine space $p + \langle u_1,\dots, u_{K} \rangle$ that is centered at $p$ and spanned by vectors $u_1,\dots,u_K$. 
We regard $K$ as the dimension of $X$ around $p$, since the affine space would approximate the tangent space of the centered true manifold as explained in \cref{sec:theoretical}.
The pseudocode for computing the dimension $K$ and vectors $u_1, \dots u_K, u_{K+1}$ is shown in \cref{algorithm:dimension}.

Next, we shall compute the curvature of $X$ at $p$.
For a point $q \in \bR^n$, we set $x_i(q)= (q-p)\cdot u_i $ for $i=1,\dots,K+1$. 
We fit a quadratic hypersurface $x_{K+1} = \frac{1}{2} \sum_{i,j} a_{i,j} x_i x_j=:f(x_1,\dots,x_K)$ with $a_{i,j}=a_{j,i}$ to $X$.
For this, we minimize the square error 
\begin{equation}
    E(a)=\sum_{q \in B} \left( \frac{1}{2} \sum_{i,j}a_{i,j} x_i(q) x_j(q) -x_{K+1}(q) \right)^2.
\end{equation}
Solving the equations $\partial E/\partial a_{i,j}=0$ gives $a_{i,j}$'s. 
We define the curvature of $X$ at $p$ by the Hessian $H_f$ of $f$, which is computed as 
\begin{equation}
    H_f=
    \det 
    \begin{bmatrix}
      a_{1,1} & \cdots & a_{1.K} \\
      \vdots & \ddots & \vdots \\
      a_{K,1} & \cdots & a_{K,K}
    \end{bmatrix}.
\end{equation}
The pseudocode for computing the curvature is shown in \cref{algorithm:curvature}.

Combining the procedures described above, we obtain an algorithm to compute the dimension and the curvature of $X$ at $p$ given thresholds $\varepsilon, \delta>0$. 
The pseudocode for the overall algorithm is given in \cref{algorithm:dim-curv}.
We can use the parameter $\varepsilon$ as a persistent parameter as in topological data analysis, which encodes a multi-scale topological feature of a given point set.
That is, varying $\varepsilon$ from $0$ to $+\infty$ we track change of values of dimension and curvature. 
A small $\varepsilon$ reflects the local structure of $X$ near $p$ while a large reflects the global one. 
Moreover, if the value is close to a fixed constant for different $\varepsilon$, that is, if the value ``persists" for a long time, we can regard it as the intrinsic one.
In \cref{sec:exp}, we show change of dimension and curvature depending on $\varepsilon$ in various examples.

\begin{algorithm}[!ht]
\caption{Compute dimension}
 \label{algorithm:dimension}
 \SetKwInOut{Input}{input}
 \SetKwInOut{Output}{output}
 \SetFuncSty{textrm}
 \SetCommentSty{textrm}
 \SetKwFunction{ComputeDimension}{{\scshape ComputeDimension}}
 \SetKwProg{myfunc}{}{}{}
\Input{$X$: point set, $p \in X$: point, $\varepsilon, \delta$: positive numbers}
\Output{$\dim(X, p; \varepsilon, \delta), (u_1, \dots, u_{\dim(X, p; \varepsilon, \delta)+1})$: non-negative integer and orthonormal vectors}
\myfunc{\ComputeDimension{$X, p; \varepsilon, \delta$}}{
$B \gets X \cap B(p;\varepsilon)$ \; 
$A \gets \Var_{B}(p)$ \; 
diagonalize $A$ as $U^TDU$ with $D=\diag(\lambda_1,\dots,\lambda_n), \lambda_1 \ge \dots \geq \lambda_n, \allowbreak U=[u_1,\dots,u_n]$ \; 
$K \gets \#\{i \in \{1,\dots,n\} \mid \lambda_i \ge \delta \}$ \;
\Return $K, (u_1,\dots,u_K,u_{K+1})$} 
\end{algorithm}

\begin{algorithm}[!ht]
\caption{Compute curvature}
 \label{algorithm:curvature}
 \SetKwInOut{Input}{input}
 \SetKwInOut{Output}{output}
 \SetFuncSty{textrm}
 \SetCommentSty{textrm}
 \SetKwFunction{ComputeCurvature}{{\scshape ComputeCurvature}}
 \SetKwProg{myfunc}{}{}{}
\Input{$B$: point set, $p \in B$: point, $u_1, \dots, u_K, u_{K+1}$: orthonormal vectors}
\Output{$\curv(B, p; u_1, \dots, u_K, u_{K+1})$: real number}
\myfunc{\ComputeCurvature{$X, p;u_1, \dots, u_K, u_{K+1}$}}{
\ForEach{$q \in B, i=1, \dots, k+1$}{$x_i(q) \gets (q-p)\cdot u_i $\;}
solve $\frac{\partial}{\partial a_{i,j}} \left( \sum_{q \in B} \left( \frac{1}{2} \sum_{i,j}a_{i,j} x_i(q) x_j(q) -x_{K+1}(q) \right)^2 \right)=0$ with $a_{i,j}=a_{j,i}$ \;
\Return $\det 
\begin{bmatrix}
      a_{1,1} & \cdots & a_{1,K} \\
      \vdots & \ddots & \vdots \\
      a_{K,1} & \cdots & a_{K,K}
\end{bmatrix}$}
\end{algorithm}

\begin{algorithm}[!ht]
\caption{Compute dimension and curvature}
 \label{algorithm:dim-curv}
 \SetKwInOut{Input}{input}
 \SetKwInOut{Output}{output}
 \SetFuncSty{textrm}
 \SetCommentSty{textrm}
 \SetKwFunction{ComputeDimensionCurvature}{{\scshape ComputeDimensionCurvature}}
 \SetKwProg{myfunc}{}{}{}
\Input{$X$: point set, $p \in X$: point, $\varepsilon, \delta$: positive numbers}
\Output{$\dim(X, p;\varepsilon,\delta), \curv(X, p;\varepsilon,\delta)$: non-negative integer and real number}
\myfunc{\ComputeDimensionCurvature{$X,p;\varepsilon,\delta$}}{
$K, (u_1, \dots, u_K, u_{K+1}) \gets \textsc{ComputeDimension}(X, p;\varepsilon, \delta)$ \; 
$B \gets X \cap B(p; \varepsilon)$ \;
$c \gets \textsc{ComputeCurvature}(B, p; u_1, \dots, u_K, u_{K+1})$ \; 
\Return $K, c$}
\end{algorithm}

\subsection{Clustering based on curvatures}\label{subsec:clustering}

Using the curvature values, we also propose a clustering algorithm. 
First we cluster the obtained curvature values, then add the cluster information as the $(n+1)$th coordinate, and finally apply single-linkage clustering algorithm. 
More precisely, our clustering algorithm is the following:
\begin{enumerate}
    \item set a scaling parameter $t>0$ and a threshold $d, d'>0$;
    \item compute the curvature value $c(p)$ for $p \in X$ depending on thresholds $\varepsilon, \delta>0$ and set 
    \begin{align}
        a(p) :=
        \begin{cases}
            -t & (c(p)<-d) \\
            0 & (|c(p)| \le d) \\
            t & (c(p)>d)
        \end{cases};
    \end{align}
    \item embed $X$ into $\bR^{n+1}$ by $p \mapsto (p,a(p))$ and write $X'$ for the image; 
    \item connect every two point in $X'$ with distance less than $d'$ by an edge and output the resultant connected components. 
\end{enumerate}
If we vary the threshold $d'$ in the step~(iv), this corresponds to computing hierarchical clustering or 0-th persistence homology in persistent homology theory.
When we use a fixed threshold $d'$, one need to be set $d'$ less than $t$ so that the curvature coordinate effectively works. Namely, we want to distinguish two points $p$ and $p+\varepsilon'$  that are very near but have quite different curvature value, hence the threshold should be smaller than $\lim_{\varepsilon' \to 0} d((p,0), (p+\varepsilon', t)) = t$. 
In our experiments presented in \cref{sec:exp}, we set the distance threshold $d'$ to be $t/2$.
We remark that in the step~(iv) we can use other methods than single-linkage clustering, such as $k$-means or mean-shift clustering. 
Applying a clustering algorithm to the embedded point set instead of the original $X$ enables us to separate convex and concave parts in the point set.
The experiment with artificial and real-world data sets will be shown in the next section. 

% Note also that single-linkage clustering algorithm corresponds to computing $0$-homology, i.e., connected components. 
% If we can output a point set providing a $1$-homology class, i.e., ``hole" roughly speaking, in the embedded point cloud in $\bR^{n+1}$ by applying inverse analysis for homology class such as \cite{escolar2016optimal,obayashi2018volume}, it would be helpful in practical applications. 
% For example, we can distinguish two 1-cycles that are topologically homologous but are desired to be distinguished geometrically, such as cycles placed in a cylinder and a sphere which are tied by connected summation. 

\section{Experiments}\label{sec:exp}

In this section, we showcase our method to compute curvature-like quantity from artificial and real-world point sets.
Although our method can be applied to point sets in any dimensional Euclidean space, we restrict ourselves to point sets in $3$-dimensional space for visualization.

In practical use, the density around a point in a given point set $X$ is different from point to point in general. 
Hence the number of points in the intersection $X \cap B(p;\varepsilon)$ heavily depends on the point $p$ if we use a fixed $\varepsilon$.
For this reason, we adjust $\varepsilon$ for each point, from the viewpoint of density.
More precisely, for the radius of the ball centered at $p$ we use $\varepsilon(p)$ computed as follows.
\begin{enumerate}
    \item Fix a positive number $\eta>0$ and set $r = 1/10 \times \text{(the diameter of $X$)}$. 
    \item For $p \in X$, set $N(p):=\#(X \cap B(p;r))$, the number of points in $X \cap B(p;r)$.
    \item Define $\varepsilon(p):=2\eta/N(p)$ for each $p \in X$.
\end{enumerate}
In all the experiment in this section, we set $\eta$ and compute $\epsilon$ as described above.
Moreover, the orientation was always taken as explained in \cref{subsec:curv-alg}.

\subsection{Artificial data sets}

In this subsection, we show the result of computation of the curvatures for artificial point sets.
\subsubsection{Simple examples}
First, we observe the behavior of our curvature value for two simple point sets.
We generated two point sets consisting of $3000$ points on the hypersurfaces $z=-x^2-y^2$ and $z=x^2-y^2$, respectively. 
The parameter $\varepsilon$ was set as explained above with $\eta$ equal to $3$ times the diameter of the point set and $\delta=0.001$. 
The results are shown in \cref{fig:neg_pos}. 

\begin{figure}[!ht]
    \centering
    \includegraphics[width=120mm]{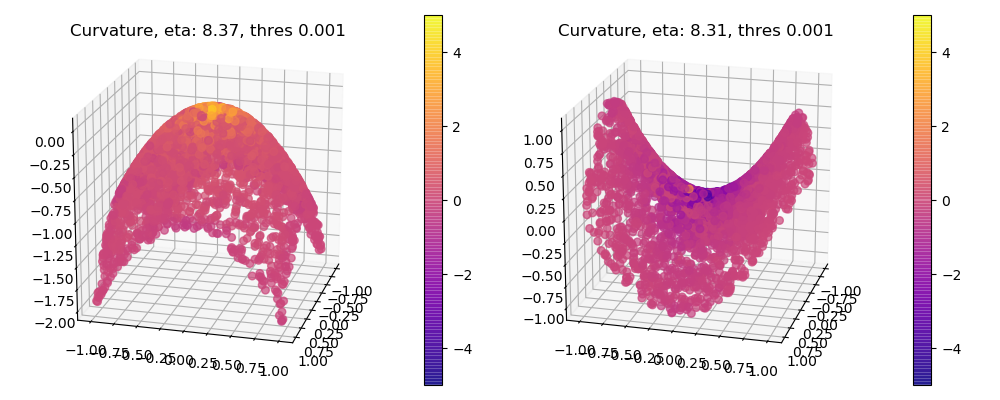}
    \caption{The curvature values for the surfaces $z=-x^2-y^2$ and $z=x^2-y^2$.}
    \label{fig:neg_pos}
\end{figure}

In these results, we can see that for the surface $z=-x^2-y^2$ (left) the curvature values are positive while for $z=x^2-y^2$ (right) the curvature values are negative, which coincide with the usual Gaussian curvatures $4/(1+4x^2+4y^2)^2$ and $-4/(1+4x^2+4y^2)^2$, respectively. 
As explained in \cref{subsec:gauss}, the curvature values are independent from the choice of frames since they are 2-dimensional.

\subsubsection{Topologically similar examples distinguished by curvature values }\label{topex}

Next we showcase two point sets that are difficult to distinguish by persistent homology but can be easily distinguished by our method. 
Roughly speaking, persistent homology captures the topology of point set $X$ in $\bR^n$ and summarizes the information as a persistence diagram, which is a multiset in $(\bR \cup \{\infty\})^2$. 
Here, we briefly explain what they are (see, for example \cite{edelsbrunner2010computational}, for more details).
The idea is to consider the homology of the union $X_r = \bigcup_{p \in X} \overline{B}(p;r)$ for $r \ge 0$, where $\overline{B}(p;r)$ denotes the closed ball of radius $r$ centered $p$, and the family $(H_n(X_r), {i_{r,s}}_*)_{r \le s \in \bR_{\ge 0}}$ of homology and homomorphisms induced by inclusions $i_{r,s} \colon X_r \hookrightarrow X_s$. 
The family is called the \emph{persistent homology}. 
Any homology class $\alpha$, which represents a connected component, a loop, or a cavity, etc., is generated at $r=d_\alpha$ and vanishes at $r=d_\alpha$, where $d_\alpha=\infty$ if $\alpha$ does not vanish. 
The collection $\{(b_\alpha, d_\alpha)\}_\alpha$ for all homology classes is called the \emph{persistence diagram}. 
A homology class $\alpha$ can be seen as a topological noise if $d_\alpha-b_\alpha$ is small, which means it vanishes just after the generation.
Thus, in a persistence diagram, a point near the diagonal is regarded as a noise and can be ignored.

In the experiment, we used the two point sets displayed in the upper row of \cref{fig:pd_histogram}.
The first one is a finite set of $3000$ points on the union of $S^1 \times [0,1]$ and two discs (upper left), where we randomly sampled $1500$ point from $S^1 \times [0,1]$ and $750$ points from discs, respectively.
The second one is a finite set of $3000$ points from the sphere with radius $0.5$ (upper right).
We computed the persistence homology of each point set and output the corresponding persistence diagram, which is shown in the middle row of \cref{fig:pd_histogram}.
Here, we used alpha complex for reducing the computation complexity, with the use of \texttt{GUDHI} library\footnote{\url{https://gudhi.inria.fr}} in Python. 
We also computed the curvature values at each point in the point sets with $\eta=3 \times \text{(the diameter of the point set)}, \delta=0.001$ and made the histogram of the values, which is shown in the third row of \cref{fig:pd_histogram}.

\begin{figure}[!ht]
    \centering
    \includegraphics[width=115mm]{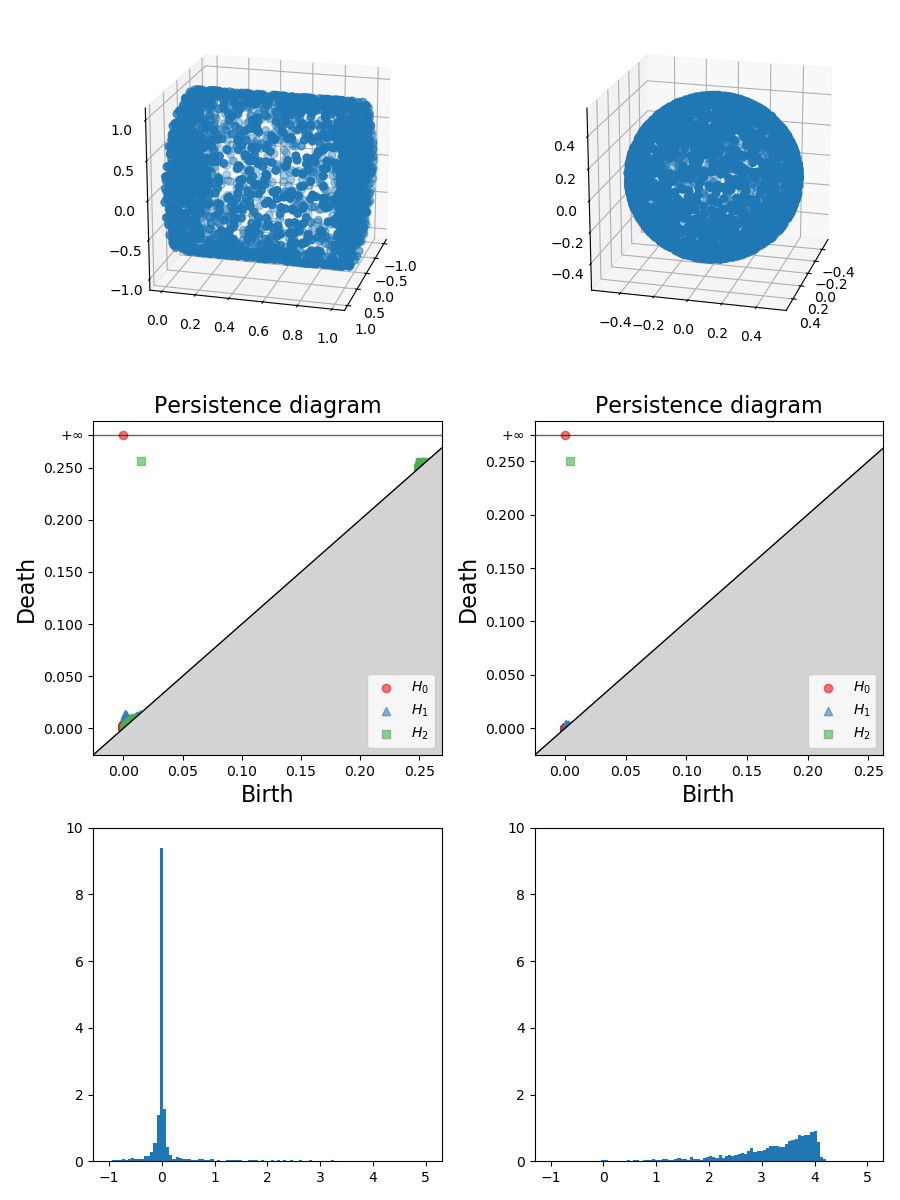}
    \caption{The two point sets that have similar persistence diagrams and different histograms of curvature values. The major difference between the diagrams appear only near the diagonal and can be ignored from the view of persistent homology. While the two diagrams are similar apart from a neighborhood of the diagonal, the histograms of curvature values are quite different.}
    \label{fig:pd_histogram}
\end{figure}

In these results, we can see the two point sets have similar persistence diagrams while they have quite different histograms of curvature values. 
When the topological feature of data sets is not sufficient and more geometric information is needed for a task in question, our curvature values could be helpful in practical situation.

\subsubsection{More examples and clustering}

Next, we observe the behavior of curvature values depending on the choices of $\varepsilon$.
For this purpose, we generated two artificial point sets as follows.
The first one is the union of $S^1 \times [0,1]$ and two discs same as in \cref{fig:pd_histogram} in \cref{topex}. 
Here we randomly sample $1500$ point from $S^1 \times [0,1]$ and $750$ points from each disc.
The second one is a union of $S^1 \times [0,1]$ and two hemi-ellipsoids (\cref{fig:column_spheres}). 
Similarly to the above, we randomly sample $1500$ point from $S^1 \times [0,1]$ and $750$ points from each hemi-ellipsoid.
We applied our algorithm to these point clouds consisting of $3000$ points.
Here the threshold $\delta$ was set to be $0.001$ and 
 where $\eta$ was set to be
\begin{align}\label{eq:eta}
    k \times \text{(the diameter of a given point cloud)} \quad (k=1,2,3,4).
\end{align}
The results are shown in \cref{fig:column_discs,fig:column_spheres}. 
The left columns show the estimated dimensions and the right ones indicate the curvatures. 

In \cref{fig:column_discs}, apart from points near the edges, the dimension is $2$ for most of the points, which coincides with the true dimension. 
Moreover, most of the points on the flat parts have curvature value close to $0$ for different $\eta$.
On the other hand, a point on the edges has non-zero value because the fitted quadratic surface has high curvature at the point.
For a larger $\eta$, more points are used for the calculation of the principal component analysis and the eigenvalues of the variance matrix tends to be larger near the edges. 
This explains why enlarging $\eta$ increases the number of points whose curvature vales are $3$ near the edges. 
In \cref{fig:column_spheres}, we can observe that points near the two poles have non-zero curvature values while those on the other parts have value close to $0$. 
For this point sets, enlarging $\eta$ increases the number of points with estimated dimension $3$.

\begin{figure}[!ht]
    \centering
    \includegraphics[width=0.95\textwidth]{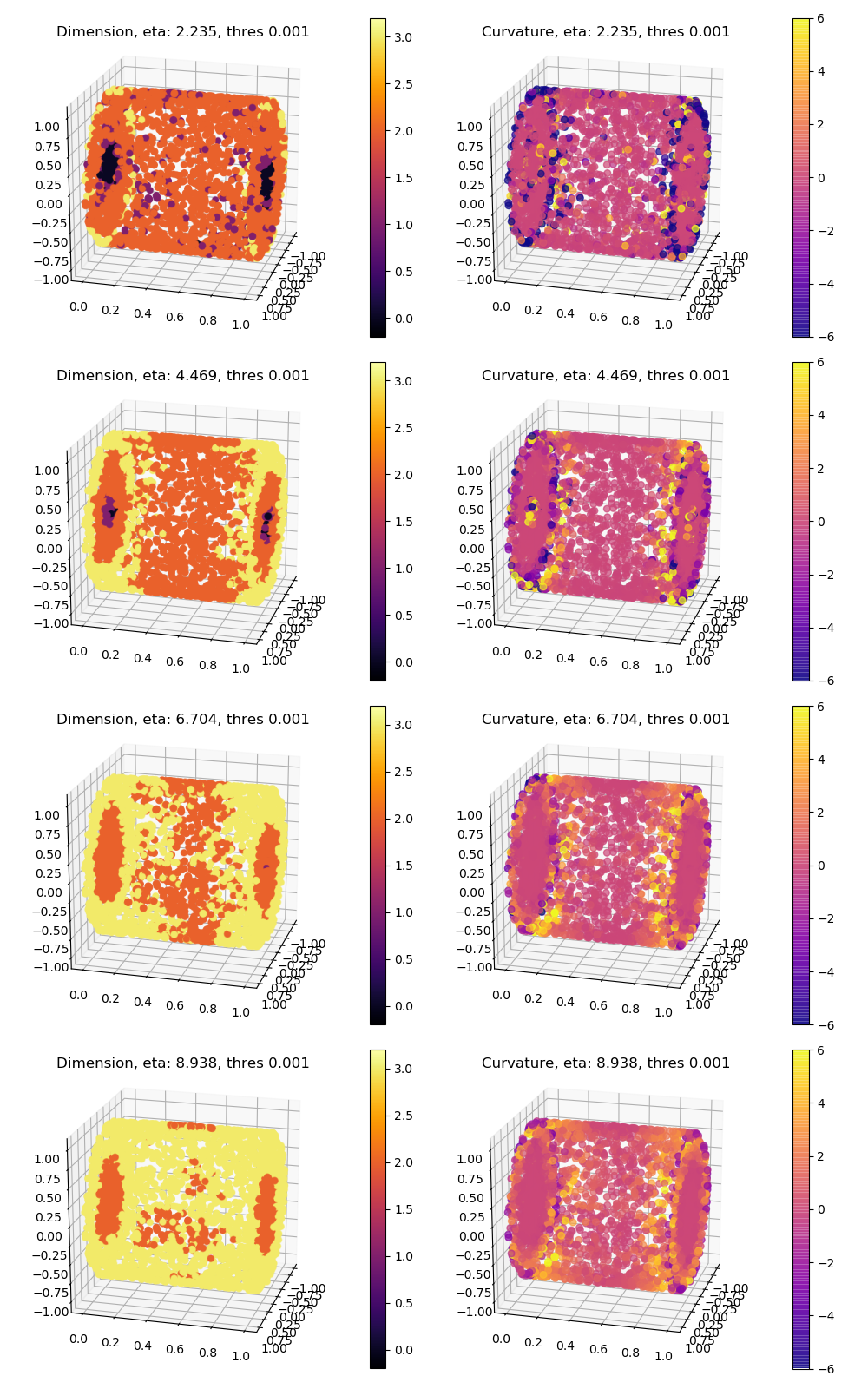}
    \caption{The estimated dimensions (left) and the curvature values (right) of $S^1 \times [0,1]$ union two flat discs.}
    \label{fig:column_discs}
\end{figure}

\begin{figure}[!ht]
    \centering
    \includegraphics[width=0.95\textwidth]{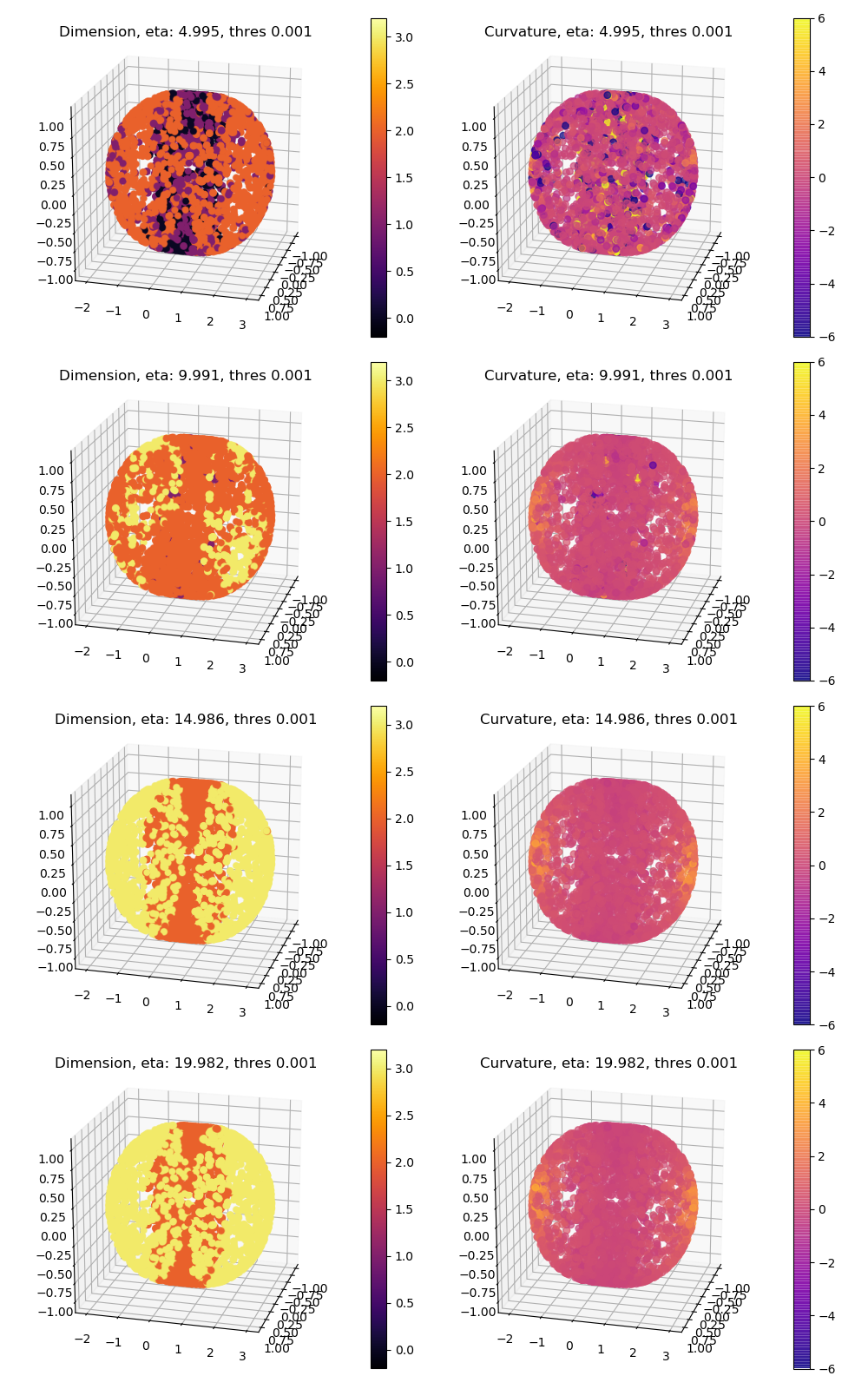}
    \caption{The estimated dimensions (left) and the curvature values (right) of $S^1 \times [0,1]$ union two hemi-ellipsoids.}
    \label{fig:column_spheres}
\end{figure}

We also applied our clustering algorithm explained in \cref{subsec:clustering} to these point sets.
We set the scaling parameter $t$ to be $4$, $d'=t/2$, and the threshold $d$ to be $0.5$. 
The results for clustering algorithm are showed in \cref{fig:clustering} depending on different choices of $\varepsilon$. 
Since the curvature values were added as the fourth coordinates, the clustering algorithm separated points with different curvature values as well as coordinate values in $\bR^3$.
For $S^1 \times [0,1]$ union two flat discs, the algorithm clustered the points sets into three parts, that is, the edges with high curvatures, the flat discs union the middle part of the column with curvature close to $0$, and the rest. 
Adjusting the scaling parameter $t$ and $d'$, one could also cluster the point sets into neighborhoods of two edges on both sides and the rest. 
For $S^1 \times [0,1]$ union two hemi-ellipsoids, the algorithm clustered the point sets into three parts, two regions near the poles and the rest, as expected. 
This is because the regions near the poles had higher curvature values and these regions were far enough in $\bR^3$.

\begin{figure}[H]
    \centering
    \includegraphics[width=130mm]{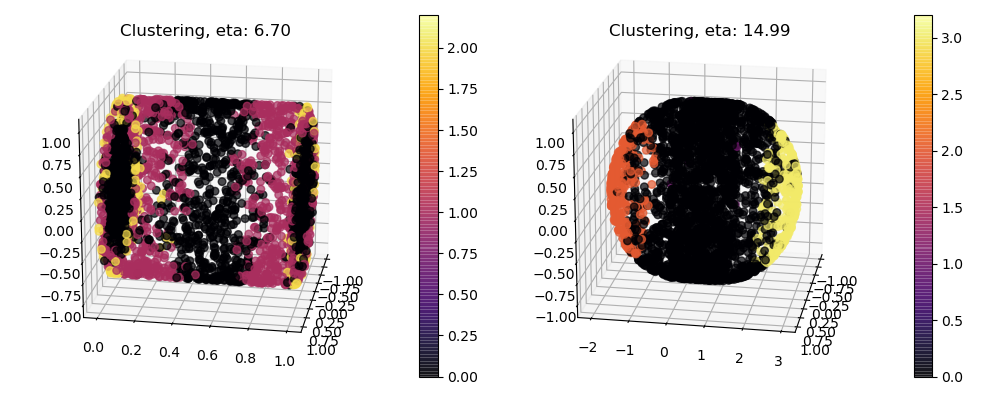}
    \caption{The clustering results of the artificial data sets.}
    \label{fig:clustering}
\end{figure}

\subsubsection{Noisy datasets}\label{subsec:noise}

Here, we show the validity of our construction of curvature function by checking \eqref{eq:lln}, that is, mean of the curvature coincides with the curvature of the mean manifold. We consider the behavior of our algorithm on noisy data.

We generate point sets by the following procedure.
First, we generate randomly 1000 points $a_1, \dots, a_{1000}$ on the plane $\bR^2$.
Next, we consider a set of random variables $(X_1, \dots, X_{1000})$ that follows $N(0,C)$, the normal distribution with mean $0$ and covariance matrix $C=(C_{ij})$ defined by 
\begin{equation}
    C_{ij} = \exp(-\|a_i-a_j\|_2^2).
\end{equation}
Consider a surface $z=f(x,y)$ with $f \equiv 0$ or $f(x,y)=\pm \sqrt{1-x^2-y^2}$.
We generate 1000 points by setting $p_i := f(a_i) + (\sigma X_i) n_i$, where $\sigma>0$ and $n_i$ is a unit normal vector of the surface at $f(a_i)$. 
The first distribution is same as the one explained in the fourth paragraph of \cref{sec:framework}. 
As mentioned in \cref{rem:nsw}, this modeling can also be seen as the model proposed by Niyogi--Smale--Weingberger~\cite{NSW}.

We compute the curvature values for the point set $X=\{p_1, \dots, p_{1000} \}$ for 50~times with $\sigma=0.1, \eta=1, \delta=0.005$.
Then we compute the mean of the curvature values for each $i$, which are displayed in \cref{fig:noisy}. 
We can see that mean of the curvature is close to the curvature of the mean manifold. 

\begin{figure}[H]
    \centering
    \includegraphics[width=0.45\textwidth]{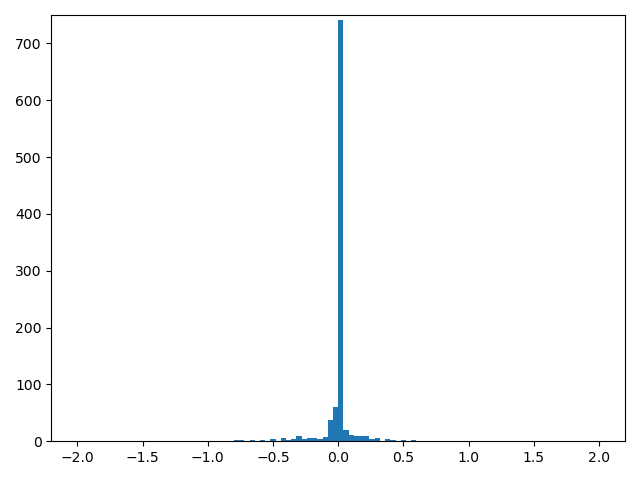}
    \includegraphics[width=0.45\textwidth]{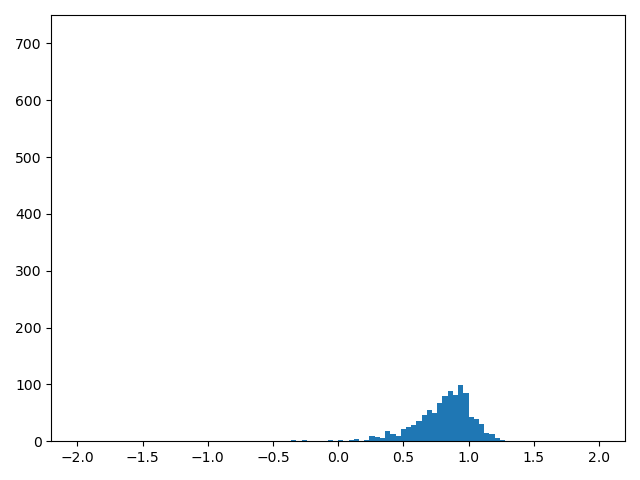}
    \caption{The distributions of the curvature values of noisy planes (left) and noisy spheres (right) with $\sigma=0.1$. We generated point sets 50~times and computed the averages.}
    \label{fig:noisy}
\end{figure}

Now we conjecture that for stochastic embeddings with suitable kernel function and suitable curvature function, the diagram \eqref{diagram1} commutes. It has already verified for the distribution with $f(x, y) \equiv 0$ in the fourth paragraph of \cref{sec:framework}.

% \clearpage

\subsection{Real-world data sets}

Now we show the results for real-world data sets. 
For these experiments, we used \texttt{G-PCD: Geometry Point Cloud Dataset}\footnote{\url{https://www.epfl.ch/labs/mmspg/downloads/geometry-point-cloud-dataset/}}, from which we pick up Stanford Bunny and Stanford Dragon point sets.
We randomly took $3000$ points from each point set and applied our algorithm, setting $\delta$ to be $0.001$ and $\eta$ varies as in \eqref{eq:eta}.
The results are shown in \cref{fig:bunny,fig:dragon}.

In \cref{fig:bunny}, we can find that the estimated dimension fits well the true dimension $2$ for most of the points when $\eta = 3.816$ or $5.088$.
For $\eta=1.272$, only few points were used for the calculation of the principal component analysis and the algorithm could not estimate the covariance matrix well at each point. 
As $\eta$ gets larger, the estimated dimension at each point tends to be bigger. 
This is again because more points are used for the principal component analysis and the eigenvalues of the variance matrix tends to be larger near the edges. 
Indeed, the points near the base and the ears have estimated dimension $3$ for $\eta=5.088$.
For curvature values, most of the points on the flat body parts have values close to $0$ for each $\eta$, thus thus can be seen flat with our curvature. 
On the other hand, when $\eta = 3.816$ and $5.088$, the curvature values at convex or concave part, e.g., the ears and the legs, have large absolute values. 
\cref{fig:dragon} indicates a similar result for the dragon point sets. 
In this case, the high convexity and concavity causes the large absolute values of curvature for most of the points when $\eta=3.070$ or $4.093$.

\begin{figure}[ht]
    \centering
    \includegraphics[width=125mm]{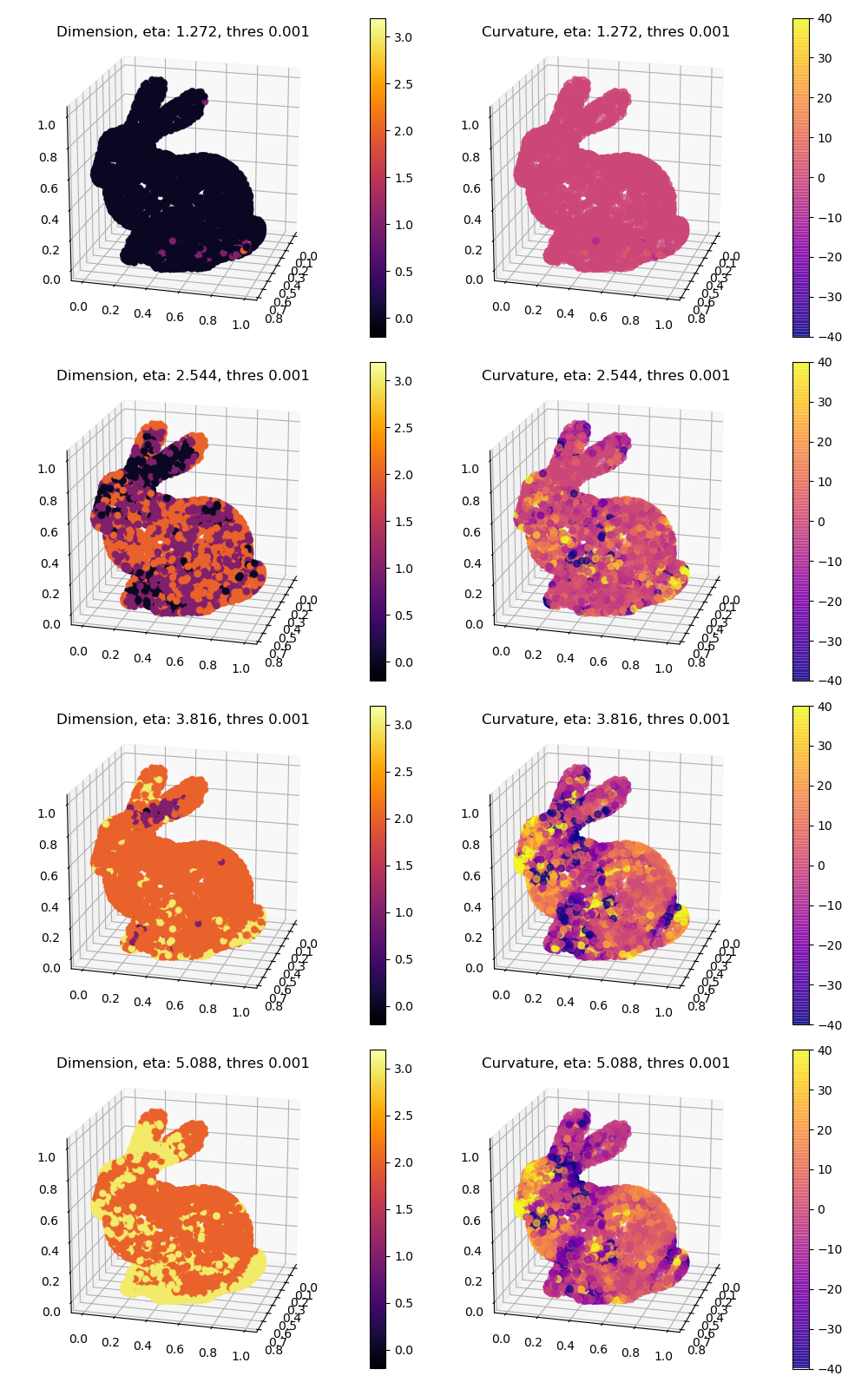}
    \caption{The estimated dimensions (left) and the curvature values (right) of the Stanford bunny data set.}
    \label{fig:bunny}
\end{figure}

\begin{figure}[ht]
    \centering
    \includegraphics[width=125mm]{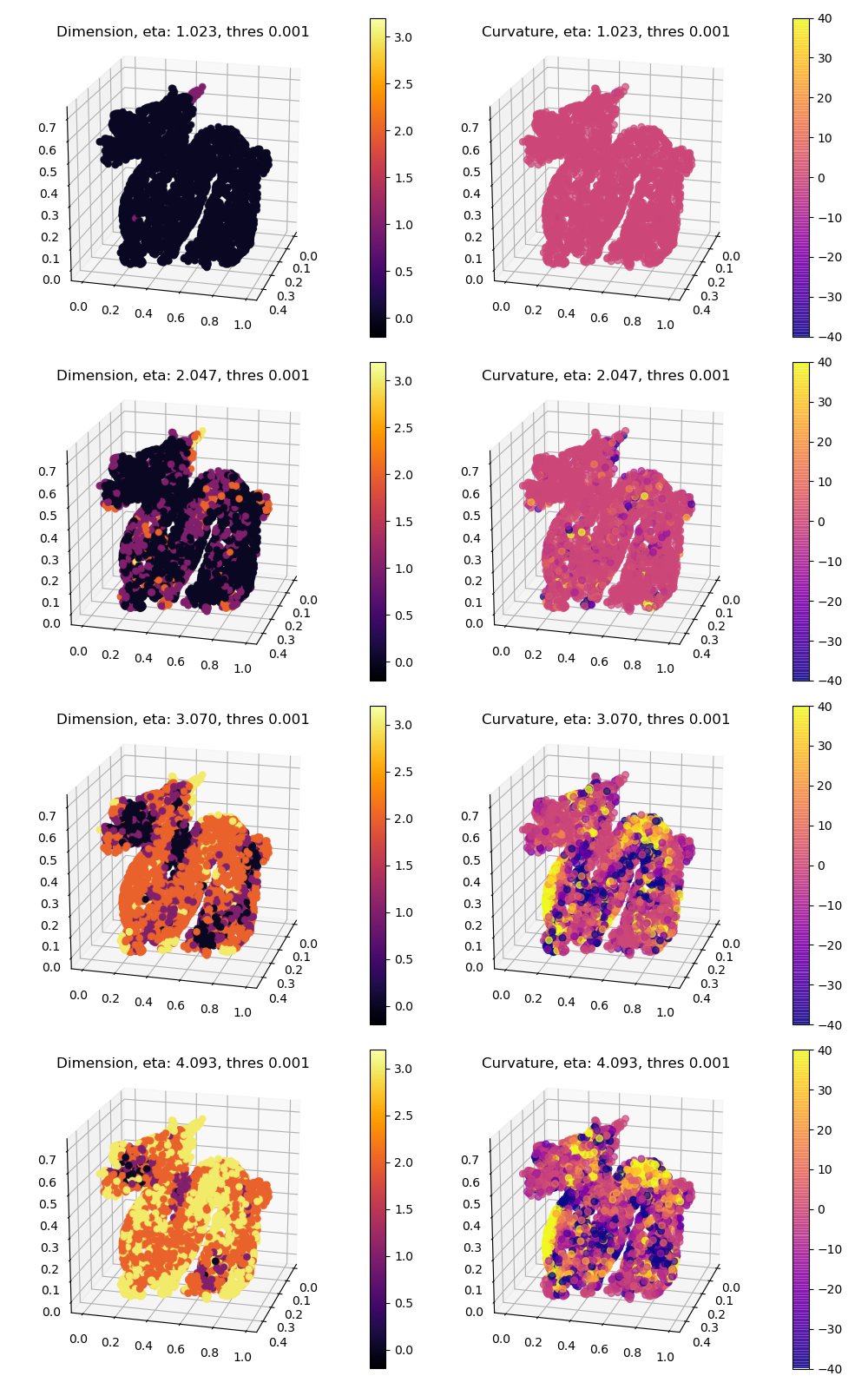}
    \caption{The estimated dimensions (left) and the curvature values (right) of the Stanford dragon data set.}
    \label{fig:dragon}
\end{figure}

Using the curvature values, we also applied the clustering algorithm presented in \cref{subsec:clustering} to these point sets. 
Here, we set the scaling parameter $t$ to be $4$, $d'=t/2$, and the threshold $d$ to be $0.5$. 
The results are shown in \cref{fig:clustering_stanford}.
The algorithm clusters the bunny point set into three parts, the flat body part, the legs plus the ears that have large absolute value of curvature, and outliers. 
For the dragon point set, the result is not satisfactory as the bunny set. 
The method clusters the point set into three parts, depending on curvature value is sufficiently positive, close to 0, and sufficiently negative. 
In this case, one of the cluster can be seen as the flat body parts while the meaning of the others is not so clear.

\begin{figure}[H]
    \centering
    \includegraphics[width=0.9\textwidth]{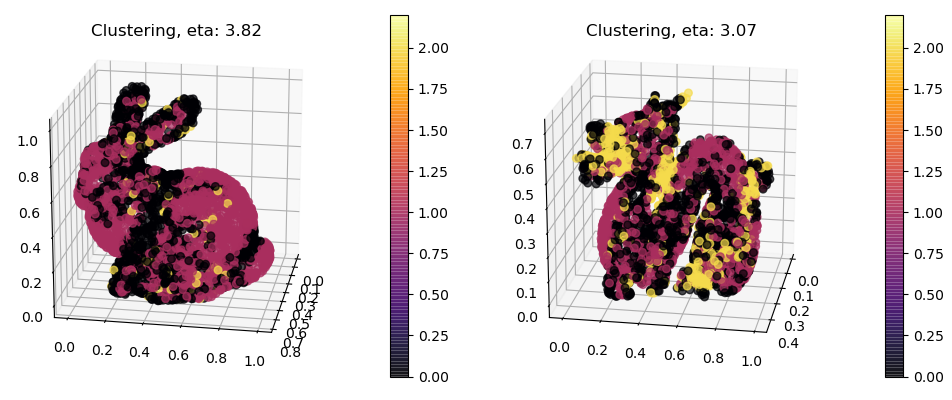}
    \caption{The clustering result of the real-world data sets.}
    \label{fig:clustering_stanford}
\end{figure}

\section{Conclusion}

Curvature is a essential quantity for studying spaces and would be helpful for studying point sets in data analysis. 
In this article, we have proposed a mathematical framework that indicates how the curvature of data should behave. Then we have given a construction of curvature function for data, and have shown its validity by numerical experiments. Here we conjecture that our framework is appropriate in the sense that the diagram \eqref{diagram1} commutes in suitable situations. We also have proposed a method to compute the dimensions and the curvatures of point sets with respect to our definitions.
We have also given some experimental results which supports effectiveness of our method.
Fixing the parameter $\varepsilon$ in our algorithm would take unexpected values due to noises, thus we proposed computing the curvature values as functions of $\varepsilon$.
We expect that such function-like treatment would give robustness against noise, and we would build mathematical foundation for them in future work.

\clearpage

\bibliographystyle{alpha}
\bibliography{references}

\end{document}